\documentclass[twoside]{article}
\usepackage{amsmath,amssymb,amsthm,latexsym,bm,indentfirst,wasysym}
\usepackage{graphicx,epsfig,amscd,mathrsfs,multirow,bm}
\usepackage{cite}
\usepackage{caption}
\usepackage{color}

\usepackage{booktabs}
\usepackage{algorithm}
\usepackage{algorithmic}

\usepackage{caption}
\usepackage[colorlinks,linkcolor=red,anchorcolor=red,citecolor=blue]{hyperref}
\numberwithin{equation}{section}
\usepackage{titlesec}
\titleformat{\section}{\large\bfseries}{}{0pt}{}

\input xy
\xyoption{all}
\input scrload.tex

\allowdisplaybreaks[4]

\renewcommand{\arraystretch}{1.2}
\catcode`@=11
\long\def\@makefntext#1{\noindent #1}
\newskip\tabcentering \tabcentering=1000pt plus 1000pt minus 1000pt
\def\MCH#1#2{\setbox0=\hbox{\raise#1\hbox{#2}}\smash{\box0}}% move char

\def\@evenfoot{}\def\@oddfoot{}

\def\@evenhead{\hbox to\textwidth{\small\rm\thepage \hfill
{\it  Xingfu Li}}} % authors name (the given name is before the surname, and use "and" to separate two authors)%

\def\@oddhead{\hbox to \textwidth{\small{\it
Complexity of co-linear chaining
} \hfill\thepage}}   % An abbreviated title not exceeding 45 characters (including spaces) for the running head

\catcode`@=12

\def\bc{\begin{center}}
\def\ec{\end{center}}
\def\no{\noindent}
\def\hang{\hangindent\parindent}
\def\textindent#1{\indent\llap{\qquad #1\ \ \enspace}\ignorespaces}
\def\ref{\par\hang\textindent}

\begin{document}

 \newtheorem{theorem}{Theorem}
 \newtheorem{lemma}{Lemma}
 %\newtheorem{proof}{Proof}
 %\newproof{proof}{Proof}

%\linenumbers

\abovedisplayskip=6pt plus 1pt minus 1pt \belowdisplayskip=6pt
plus 1pt minus 1pt
%-------------------  First Head  -----------------------------------------
\thispagestyle{empty} \vspace*{-1.0truecm} \noindent

%===================Text=============================================
\vskip 10mm \bc{\Large\bf %Sequence-to-graph alignment with co-linear structure  is not easy to solve   %Text title% 
%Challenges in Solving Sequence-to-Graph Alignment with Co-Linear Structure
Complexity of Sequence-to-Graph Alignment with Co-Linear Chaining
\footnotetext{\footnotesize
%Received ; Accepted \\ % The received date and accepted date of the article (the author does not need to fill out )%
Supported by Guizhou Provincial Basic Research Program (Natural Science) (Grant No.\,  ZK[2022]020).\\
* Corresponding author\\  % indicate the corresponding author in case of multiple authors, and it is not necessary to do so in case of only one author
 E-mail address: xingfuli@mail.gufe.edu.cn
(Xingfu  Li)
} } \ec  % fund project and author's email address %

\vskip 5mm
\bc{\bf  Xingfu Li$^{*}$}\\  %Author's name (the given name is before the surname),use comma to separate multiple authors and add * after the corresponding author,as Si LI here%
{\small\it  College of Big Data Statistics, Guizhou University of Finance and Economics,  Guiyang $550025$, P. R. China
%\hspace{-2.7cm}$2$. University of Tsukuba, $1$-$1$-$1$ Tennodai, Tsukuba, Ibarake $305$-$8573$, Japan;\\
%\hspace{-0.95cm}$3$. Department of Applied Mathematics, Dalian University, Liaoning $116600$, P. R. China
}\ec   % affiliation of the authors %

\vskip 1 mm

{\narrower\noindent{\small {\small\bf Abstract}\ \   
%Sequence alignment is a fundamental problem used to analyze similarities and differences between biological sequences in computational biology. Sequence-to-graph alignment is one of its variants essential for efficiently addressing genetic variations. In this paper,  we  formulate the  Gap-sensitive Co-Linear Chaining (Gap-CLC) problem and the Co-Linear Chaining with Errors based on Edit Distance (Edit-CLC) problem within the framework of sequence-to-graph alignment, and then study the computational complexity of them.  We demonstrate that solving the Gap-CLC problem in sub-quadratic time is unlikely unless the Strong Exponential Time Hypothesis (SETH) is false. This holds true even for binary alphabets. Additionally, we prove that the Edit-CLC problem is NP-hard when errors are allowed in a graph. These results indicate that models incorporating "co-linear" structures are at least as complex as those without, in the context of sequence-to-graph alignment.
Sequence alignment is a cornerstone technique in computational biology for assessing similarities and differences among biological sequences. A key variant, sequence-to-graph alignment, plays a crucial role in effectively capturing genetic variations. In this work, we introduce two novel formulations within this framework: the Gap-sensitive Co-Linear Chaining (Gap-CLC) problem and the Co-Linear Chaining with Errors based on Edit Distance (Edit-CLC) problem, and we investigate their computational complexity. We show that solving the Gap-CLC problem in sub-quadratic time is highly unlikely unless the Strong Exponential Time Hypothesis  fails—even when restricted to binary alphabets. Furthermore, we establish that the Edit-CLC problem is NP-hard in the presence of errors within the pan-genome graph. These findings emphasize that incorporating co-linear structures into sequence-to-graph alignment models fails to reduce computational complexity, highlighting that these models remain at least as computationally challenging to solve as those lacking such prior information.

%These findings emphasize that incorporating "co-linear" structures into sequence-to-graph alignment models fails to reduce computational complexity, highlighting that these models remain at least as computationally challenging to solve as those lacking such prior information.
% abstract %

\vspace{1mm}\baselineskip 12pt

\no{\small\bf Keywords} \ \ Sequence-to-graph alignment; Co-linear
chaining; Gap-sensitive Co-Linear Chaining; Edit Distance Co-Linear
Chaining; Computational complexity; Strong Exponential Time
Hypothesis; NP-hard % Key words are separated by semicolons%

\par

\vspace{2mm}

\no{\small\bf MR(2020) Subject Classification\ \ {\rm 05C85; 68Q17; 68R10; 92D20; 92-08}} %American Classification Number (more than two numbers are separated by a semicolon without any punctuation at the end)%

}}

\baselineskip 15pt

\section{1. Introduction}

%\subsection{A Subsection}

%Populations undergo genetic mutations and changes in their genomes during the process of evolution, which results in genetic variations among different individuals. By comparing the genetic sequences of different individuals, we can uncover their similarities and differences \cite{genome-similar},  gain insights into the origin, evolution, and adaptability of species \cite{genome-evolution1974On}, and understand the evolutionary relationships and functions of biological sequences \cite{SA-function}. 

Genomic mutations shape the genomes of populations, resulting in genetic divergence among individuals. Consequently, the comparative analysis of genetic sequences enables the identification of genetic similarities and differences \cite{genome-similar}, offers insights into the origin, evolution, and adaptability of species \cite{genome-evolution1974On}, and elucidates the evolutionary relationships and functional dynamics of biological sequences \cite{SA-function}.
Sequence alignment is widely employed in the analysis of DNA, RNA, and protein sequences, serving as a crucial method for exploring gene families, protein structures, and functional characteristics \cite{genome-similar}.
%A lot of work, such as sequence alignment \cite{genome-similar} and Longest Common Subsequence (LCS) \cite{LCS-ESA2024,LCS},   have been done to measure  similarity between two or more sequences. 
%\emph{Sequence alignment} is commonly used  in the analysis of DNA, RNA,  and protein sequences.  It is also an important tool in studying gene families, protein structures, and functions \cite{genome-similar}. 

Sequence alignment aims to identify the optimal correspondence among two or more sequences, revealing their similarity levels and specific variations. By incorporating matches, substitutions, and gaps or insertions, sequences are adjusted to maximize alignment quality according to predefined criteria \cite{SA-operations}.

%The goal of sequence alignment is to find the best matching relationship between two or more sequences and determine the degree of similarity and specific differences between them. During sequence alignment, we can adjust the sequences based on operations such as matches, substitutions, and gaps/insertions, 
%translocations, transpositions, and block-interchanges \cite{transposition}, 
%aiming to achieve optimal similarity under specific alignment rules \cite{SA-operations}.

%To perform sequence alignment, biologists and computer scientists have developed various algorithms and methods. The most classical algorithms include global  and local alignments \cite{Global-local-align}. Global alignment aims to find the best matching relationship across the entire sequences, while local alignment focuses on identifying similarities between segments of the sequences. Sequence alignment often involves computationally intensive algorithms, as it requires processing large amounts of sequence data and searching through all possible alignment combinations. In order to improve alignment efficiency, researchers have developed optimization algorithms and heuristic approaches, such as dynamic programming  \cite{dynamic-1974}, BLAST (Basic Local Alignment Search Tool) \cite{BLAST}, and the Smith-Waterman algorithm \cite{Smith-Waterman-alg}.

Biologists and computer scientists have devised a variety of algorithms to tackle sequence alignment, a fundamental task in bioinformatics. Among the most classical approaches are global and local alignment methods \cite{Global-local-align}. Global alignment seeks to maximize similarity across the entire length of the sequences, providing a comprehensive comparison, whereas local alignment identifies regions of high similarity within subsequences, even if the rest of the sequences differ significantly. 

Given the vast amount of sequence data and the need to evaluate numerous possible alignments, sequence alignment is inherently computationally demanding. To address this challenge, researchers have introduced efficient optimization techniques and heuristic strategies. These include dynamic programming \cite{prousalis2025survey}, which guarantees optimal solutions by breaking the problem into smaller subproblems, the widely used Basic Local Alignment Search Tool \cite{BLAST} that accelerates searches through heuristic filtering, and the sensitive Smith-Waterman algorithm \cite{Smith-Waterman-alg} for precise local alignment. These methods collectively enhance both the accuracy and efficiency of sequence analysis.

%Sequence-to-sequence alignment is naturally derived from  sequence alignment and  has been well studied since the 1970s. Dynamic programming techniques \cite{dynamic-1974} were almost the earliest to solve sequence-to-sequence alignment problems.  Myers and Miller \cite{Multi-alig-SODA} described a multiple-sequence alignment algorithm for determining the highest-scoring alignment, which is based on the line-sweep paradigm, and uses orthogonal range-searching supported by range trees. This result is a time bound higher by a logarithmic factor.  Later in 2005, Abouelhoda and  Ohlebusch \cite{JDA2005} presented better line-sweep algorithms that solve both the global and the local fragment-chaining problem of multiple genomes and reduce both the time-complexity and the space-complexity of Myers and Miller's algorithm.  In recent years gaps and overlaps have been taken into consideration in sequence-to-sequence alignment \cite{Global-local-align,Gap2020,ss-co-linear-overlap-gaps}. 

Sequence-to-sequence alignment, naturally extending from classical sequence alignment, has been extensively studied since the 1970s. Dynamic programming methods \cite{prousalis2025survey} were among the earliest to address this problem. Myers and Miller \cite{Multi-alig-SODA} proposed a multiple-sequence alignment algorithm based on the line-sweep paradigm, leveraging range trees for orthogonal range searching to identify the highest-scoring alignment, though with a time complexity increased by a logarithmic factor. In 2005, Abouelhoda and Ohlebusch \cite{JDA2005} introduced improved line-sweep algorithms that efficiently solve both global and local fragment-chaining problems in multiple genomes, reducing both time and space complexity compared to earlier approaches. Recent years have seen enhanced models incorporating gaps and overlaps in sequence-to-sequence alignment \cite{Global-local-align,Gap2020,ss-co-linear-overlap-gaps}.

%In order to handle massive  genetic variations efficiently, researchers adopted graph model and devised sequence-to-graph alignment problems.  They found that aligning a sequence to a pan-genome is much more practical. A pan-genome is the entire set of genes within a species, consisting of a core genome containing sequences shared among all individuals of the species and  "dispensable" genomes. Pan-genome graphs represent pan-genomes using graph models, effectively capturing the complete genetic variations across the input genomes.  Each vertex in the graphs represents a DNA segment, which can be any length. A path may be a genome, an haplotype, an allele or a variant. Under the help of a pan-genome graph of a species,  the goal of aligning a sequence to a pan-genome graph is to identify a path that best represents the sequence, surpassing other paths in terms of alignment score or genomic coverage.

%To efficiently manage vast genetic sequences and variations, researchers have turned to graph-based models and developed approaches for sequence-to-graph alignment. 
To effectively handle the immense volume of genetic sequences and their variations, researchers have increasingly adopted graph-based models, pioneering innovative methods for aligning sequences  to a pan-genome graph, which has proven particularly practical \cite{cui2025survey}. A pan-genome encompasses all genes within a species, composed of a core genome—shared across all individuals—and variable "dispensable" genomes. Pan-genome graphs, built using graph models, capture the full spectrum of genetic diversity among input genomes. In these graphs, each vertex represents a DNA segment of arbitrary length, while paths correspond to genomes, haplotypes, alleles, or variants. With the aid of a species' pan-genome graph, sequence alignment aims to find the optimal path that best matches the query sequence, evaluated by alignment score or genomic coverage.

%Exact and approximation algorithms for sequence-to-graph alignment problems have been both considered in recent years. If vertices on a path are required to be distinct, the problem is NP-hard \cite{only-path-NPC}  on general pan-genome graphs, even on De Bruijin graphs \cite{single-path-de-Bruijn-complexity},  since it solves the well-known Hamilton Path problem. Most related works do not have this restriction.  Sequence-to-graph alignment on general graphs was studied 20 years ago  as a search problem for hypertext \cite{original}. Amir and Lewenstein  in paper \cite{original} devised an $O(|V| + m|E|)$-time algorithm for exact sequence-to-graph alignment, where  $m$ is the length of a given pattern.  The alignments on DAGs \cite{DAG} and trees \cite{Tree} are both polynomial-time solvable with time complexity $O(|V| + m|E|)$ and $O(|V|)$ respectively. Equi et al. \cite{Eaxct-lower-bound} have recently proved a lower bound saying that, for any constant $\gamma > 0$, unless the Strong Exponential Time Hypothesis (SETH) is false, there is neither $O(|E|^{1 - \gamma}m)$ nor $O(|E|m^{1 - \gamma})$ time algorithm for the exact case on general graphs, even with a binary alphabet. 

In recent years, both exact and approximation algorithms for sequence-to-graph alignment have received significant attention. When the alignment path is constrained to visit distinct vertices, the problem becomes NP-hard \cite{only-path-NPC} on general pangenome graphs, including De Bruijn graphs \cite{single-path-de-Bruijn-complexity}, as it encompasses the classical Hamiltonian Path problem. However, most existing approaches relax this constraint. Sequence-to-graph alignment on general graphs was initially explored two decades ago in the context of hypertext search \cite{original}, where Amir and Lewenstein proposed an $O(|V| + m|E|)$-time algorithm for exact alignment, with $m$ denoting the pattern length. For restricted graph classes, efficient solutions exist: polynomial-time algorithms apply to Directed Acyclic Graphs  (DAGs) \cite{DAG} and trees \cite{Tree}, with time complexities of $O(|V| + m|E|)$ and $O(|V|)$, respectively. Recently, Equi et al. \cite{Eaxct-lower-bound} established a conditional lower bound, demonstrating that, for any constant $\gamma > 0$, unless the Strong Exponential Time Hypothesis (SETH) fails, no algorithm can solve the exact alignment problem in $O(|E|^{1 - \gamma}m)$ or $O(|E|m^{1 - \gamma})$ time on general graphs—even over a binary alphabet.

%The approximation case is a little  complex than the exact case. If errors  are  allowed  on a query sequence rather than on a graph, then this problem is polynomial-time solvable not only on DAGs \cite{DAG-approx} but also on general graphs \cite{original}. Otherwise, it is NP-hard, even the alphabet is binary \cite{original,binary-alphabet}. On De-Bruijin graph, Gibney et al. proved that there is no sub-quadratic time algorithm in the case that substitutions are only allowed on the pattern \cite{approx-hard-de-Bruijn}. 

The approximation scenario is somewhat more intricate than the exact one. When errors are permitted on the query sequence rather than on the graph structure, the problem becomes polynomial-time solvable—not only on DAGs \cite{DAG-approx} but also on general graphs \cite{original}. In contrast, it remains NP-hard even over a binary alphabet \cite{original,binary-alphabet}. For De Bruijn graphs, Gibney et al. demonstrated that no sub-quadratic time algorithm exists when substitutions are restricted to the pattern alone \cite{approx-hard-de-Bruijn}.

%Due to the obstacle of sub-quadratic term in the time-complexity, processing high-throughput sequencing data in a graph model is challenging in practice. Recently seed-chain-extended has been commonly used by modern alignment tools \cite{long-read-mapping2022,Minimap2018,long-read-survey-2022}. Co-linear chaining is a problem in the process of seed-chain-extended. 

The sub-quadratic term in time complexity poses a significant obstacle, making it practically challenging to process high-throughput sequencing data within a graph-based framework. Recently, the seed-chain-extension approach has become widely adopted in modern alignment tools \cite{long-read-mapping2022,Minimap2018,long-read-survey-2022}, where co-linear chaining emerges as a key computational challenge.

%Co-linear chaining is not  new  and  has been well-studied in sequence-to-sequence alignment \cite{Multi-alig-SODA,sum-of-gap,ss-co-linear-overlap-gaps}. Besides a query string and a reference sequence, an instance of a co-linear chaining problem additionally contains  a set of anchors. An anchor can be viewed as a pair of intervals in  two sequences corresponding to the exact seed match. An anchor-chain is an ordered subset of anchors whose intervals must appear in increasing order in both sequences. The co-linear chaining problem seeks an anchor-chain with the highest score, where the score of a chain is calculated by summing the weights of the anchors in the chain and subtracting the penalty for gaps between adjacent anchors. The score functions  are established usually based on gap-costs \cite{sum-of-gap} and overlaps \cite{ss-co-linear-overlap-gaps}.

Co-linear chaining is a well-established concept, extensively studied in sequence-to-sequence alignment \cite{Multi-alig-SODA,sum-of-gap,ss-co-linear-overlap-gaps}. Beyond the query string and reference sequence, a co-linear chaining problem includes a set of anchors—each representing a pair of intervals in the two sequences that correspond to exact seed matches. A valid anchor-chain is an ordered subset of anchors, requiring their intervals to appear in strictly increasing order in both sequences. The goal is to find the highest-scoring chain, where the score combines anchor weights and penalizes gaps between consecutive anchors. Scoring schemes typically incorporate gap costs \cite{sum-of-gap} and may account for overlaps \cite{ss-co-linear-overlap-gaps}.

%However, co-linear chaining in sequence-to-graph alignment is not easier  than that in sequence-to-sequence alignment. M\"{a}kinen et al. \cite{first-to-graph} studied co-linear chaining problems on directed acyclic graphs (DAGs), and devised a sparse dynamic programming algorithm on  DAGs. Later gap cost was introduced into the score function \cite{co-linear-gap}. Recently Rajput et al. \cite{co-linear-general} have presented a practical formula and an algorithm for co-linear chaining problem on general graphs. The formula considers gap costs not only on  query strings, but also on  pan-genome graphs.  

%However, co-linear chaining in sequence-to-graph alignment remains not so simpler than that in sequence-to-sequence alignment. 
Nevertheless, co-linear chaining in sequence-to-graph alignment remains more complex than in sequence-to-sequence alignment, posing greater computational challenges.
M\"{a}kinen et al. \cite{first-to-graph} investigated co-linear chaining on DAGs, developing a sparse dynamic programming approach tailored to DAGs. Subsequently, gap costs were incorporated into the scoring function \cite{co-linear-gap}. More recently, Rajput et al. \cite{co-linear-general} proposed a practical formulation and an efficient algorithm for co-linear chaining on general graphs, extending gap cost considerations to both query sequences and pan-genome graphs.

%In this paper, we pay attention to the computational complexity of Gap-sensitive Co-Linear Chaining problem (Gap-CLC) and Co-Linear Chaining with errors based on Edit distance problem (Edit-CLC) in sequence-to-graph alignment.  An anchor and an anchor-chain  in sequence-to-graph alignment are  a little  complex than those in sequence-to-sequence alignment.  We define an anchor by use of  Cartesian Product in Section \ref{Notations}. Specially, rather than an explicit tuple, an anchor is defined as an element in a Cartesian Product from appearances of a sub-sequence respectively on the given query sequence and pan-genome graph. Due to this, an instance of the Gap-CLC problem, as well as the Edit-CLC problem, contains a set of sub-sequences and two other sets respectively indicating appearances of a sub-sequence on the given query sequence and pan-genome graph. Both of them are formulated respectively as Problems \ref{Gap-CLC-form} and \ref{Edit-CLC-form}. Section \ref{complexity} presents the hardness of Gap-CLC problem and Edit-CLC problem. Intuitively speaking, the results show that  models that do not incorporate "co-linear" structures are respectively reduced to gap-sensitive co-linear chaining  and co-linear chaining with errors, which indicates that models incorporating "co-linear" structures are at least as complex as models that do not. Finally we conclude this paper and present some future work in Section \ref{conclude}.

In this paper, we investigate the computational complexity of the Gap-sensitive Co-Linear Chaining (Gap-CLC) and Co-Linear Chaining with Errors based on Edit distance (Edit-CLC) in sequence-to-graph alignment. Anchors and anchor-chains in this context are more intricate than their sequence-to-sequence counterparts. We formalize an anchor using the Cartesian Product, as defined in Section \ref{Notations}, where an anchor is represented not as an explicit tuple, but as an element of a Cartesian Product derived from the occurrences of a sub-sequence on both the query sequence and the pan-genome graph. Consequently, an instance of either the Gap-CLC or Edit-CLC problem comprises a set of sub-sequences along with two associated sets capturing their occurrences on the query and graph. These are formally stated as Problems \ref{Gap-CLC-form} and \ref{Edit-CLC-form}, respectively. Section \ref{complexity} analyzes the computational complexity of both problems. Intuitively, models ignoring "co-linear" structures can be reduced to the Gap-CLC and Edit-CLC frameworks, implying that incorporating co-linearity does not reduce complexity and may in fact preserve or increase it. We conclude with a summary in Section \ref{conclude} and propose some future work in Section \ref{discussions}.

\section{2. Basic Terminologies and Problem Definitions}
\label{Notations}
%This section initially defines an anchor and an anchor-chain using a Cartesian Product. Subsequently, we define the sequence spelled by an anchor-chain in sequence-to-graph alignment. Finally, we formulate the Gap-CLC  and Edit-CLC problems. 
This section begins by formally defining  \emph{anchor} and  \emph{anchor-chain} through the Cartesian Product. It then proceeds to characterize the sequence encoded by an anchor-chain within the context of sequence-to-graph alignment. Finally, it presents the formal formulations of the Gap-CLC and Edit-CLC problems.

%A \emph{pan-genome graph} $G=(V,E,\delta)$ is a directed and vertex-labeled graph in which $V$ is the vertex set; $E$ is the directed edge set;  $\delta: V\rightarrow \Sigma^*\setminus \{\epsilon\}$ is a function from $V$ to the set of all finite strings $\Sigma^{*}$ on a given alphabet $\Sigma$, where $\epsilon$ is an empty string.  In this paper, the set of vertices and edges are respectively denoted as $V(G)$ and $E(G)$. In addition, the string corresponding to a vertex $v\in V(G)$ is written as $\delta(v)$.   Since a read might appear continuously over multiple locations, this paper allows \emph{self-loops} in a pan-genome graph. In other words, there might be an edge $(v,v)\in E(G)$ for a vertex $v$ in a pan-genome graph $G$. 

A \emph{pan-genome graph} $G = (V, E, \delta)$ is a directed graph with vertex labels, where $V$ denotes the set of vertices, $E$ denotes the set of directed edges, and $\delta: V \to \Sigma^* \setminus \{\epsilon\}$ assigns to each vertex a non-empty string from the finite alphabet $\Sigma$, with $\epsilon$ representing the empty string. We denote the vertex and edge sets of $G$ as $V(G)$ and $E(G)$, respectively. For any vertex $v \in V(G)$, the associated string is given by $\delta(v)$. As reads may span multiple consecutive positions in the graph, we permit \emph{self-loops}, i.e., an edge $(v, v) \in E(G)$ may exist for some vertex $v$ in $G$.

%The number of elements in a sequence $S$  is said to be the \emph{length} of it and denoted as $|S|$. The subscripts are the indices and indicate the positions of the elements. In this paper, all indices begin with zero. For example, in a sequence  $S=$"$ababc$", the leftmost '$a$' is indexed as $S_0$, while the rightmost '$a$' is indexed as $S_2$. If $a$ is an element of a sequence $S$, then we denote this as "$a\in S$". If there is a walk from $u$ to $v$ in $G$, then we say that "$u$ reaches $v$" or "$v$ can be reached from $u$".   

The number of elements in a sequence $S$ is referred to as its \emph{length}, denoted by $|S|$. Indices, represented as subscripts, indicate the positions of elements within the sequence, and in this paper, all indexing starts at zero. For instance, in the sequence $S = "ababc"$, the leftmost '$a$' is indexed as $S_0$, and the rightmost '$a$' is indexed as $S_2$. 
%We write $a \in S$ to denote that $a$ is an element of $S$. 
Furthermore, if there exists a walk from vertex $u$ to vertex $v$ in a graph $G$, we say that "$u$ reaches $v$" or equivalently that "$v$ is reachable from $u$".

%Given two sequences  $q\in \Sigma^{*}\setminus\{\epsilon\}$ and  $s\in \Sigma^{*}\setminus\{\epsilon\}$, if there is an integer $x: \ 0\leq x< |s|$ such that $q_{i} = s_{x+i}$ and $x + i <|s|$ for every $0\leq i < |q|$, then the integer $x$ is said to be a \emph{location} of $q$ in $s$. And $q$ is said to be \emph{located} at $x$ in $s$.  Obviously, there might be more than one location  of   $q$ in $s$. For example, if $q=$"$ab$" and $s=$"$ababc$", then $q$ is located at $0$ as well as at $2$ in $s$.  Let $loc(q,s)$ be the set of locations of $q$ in $s$. 

Given two non-empty sequences $q \in \Sigma^{*}\setminus\{\epsilon\}$ and $s \in \Sigma^{*}\setminus\{\epsilon\}$, an integer $x$ with $0 \leq x < |s|$ is called a \emph{location} of $q$ in $s$ if $q_i = s_{x+i}$ for all $0 \leq i < |q|$, provided $x+i < |s|$. In this case, we say $q$ is \emph{located} at position $x$ in $s$. Clearly, $q$ may occur at multiple locations within $s$. For instance, when $q = ab$ and $s = ababc$, $q$ appears at positions $0$ and $2$. 
%Denote by $loc(q,s)$ the set of all such locations of $q$ in $s$.
Let $\mathrm{loc}(q,s)$ denote the collection of all positions where $q$ appears in $s$.

%Given a sequence $q\in \Sigma^{*}\setminus\{\epsilon\}$ and a pan-genome graph $G$, if there is a vertex $v\in V(G)$ and an integer $x:\ 0\leq x<|\delta(v)|$ such that $q$ is located at $x$ in $\delta(v)$, then the ordered pair $(v,x)$ is said to be a location of $q$ in $G$. And $q$ is said to be located at $(v,x)$ in $G$. Let $loc(q,G)$ be the set of locations of $q$ in a pan-genome graph $G$. For each location $p\in loc(q,G)$,  the first element of $p$ is denoted as $p\cdot v$, while the second one  is denoted as $p\cdot x$. 

Given a non-empty sequence $q \in \Sigma^{*} \setminus \{\epsilon\}$ and a pan-genome graph $G$, 
%a vertex $v \in V(G)$ together with an integer $x:$ $0 \leq x < |\delta(v)|$ defines a position where $q$ located at  $x$ within $\delta(v)$. In such a case, the ordered pair $(v, x)$ is referred to as a location of $q$ in $G$, 
 if there is a vertex $v\in V(G)$ and an integer $x:\ 0\leq x<|\delta(v)|$ such that $q$ is located at $x$ in $\delta(v)$, then the ordered pair $(v,x)$ is said to be a location of $q$ in $G$.
And we say $q$ is located at $(v, x)$ in $G$. Let $loc(q, G)$  be the set of all  locations of $q$ in the pan-genome graph $G$. For every location $p \in loc(q, G)$, the vertex component is denoted $p \cdot v$ and the offset component $p \cdot x$.

%Given two sequences $q, Q\in \Sigma^{*}\setminus\{\epsilon\}$ and a pan-genome graph $G$, let $A_q$ be the \emph{cartesian product} of $loc(q,Q)$ and $loc(q,G)$, i.e.,  $A_q = loc(q,Q)\times loc(q,G) = \{(x,y):\ x\in loc(q,Q)$ and  $y\in loc(q,G)\}$. Note that if either $loc(q,Q)=\emptyset$ or $loc(q,G)=\emptyset$, then $A_q=\emptyset$. Each element belonging to $\cup_{q\in \Sigma^{*}\setminus \{\epsilon\}}A_q$ is said to be an \emph{anchor}. Every anchor corresponds to a unique sequence $q\in \Sigma^{*}\setminus\{\epsilon\}$. For an anchor $a=(x,y)$ in the set  $\cup_{q\in \Sigma^{*}\setminus \{\epsilon\}}A_q$ , the first element of $a$ is denoted as $F(a)$, while the second one is denoted as $H(a)$, i.e., $ F(a)= x$ and $ H(a)=y$. A subset of $\cup_{q\in \Sigma^{*}\setminus \{\epsilon\}}A_q$ is named as an \emph{anchor-set} on $(Q,G)$.

Given two non-empty sequences $q, Q \in \Sigma^{*}\setminus\{\epsilon\}$ and a pan-genome graph $G$, define $A_q$ as the \emph{cartesian product} of $loc(q,Q)$ and $loc(q,G)$, that is, $A_q = loc(q,Q) \times loc(q,G) = \{(x,y) \mid x \in loc(q,Q),\, y \in loc(q,G)\}$. It follows that $A_q = \emptyset$ if either $loc(q,Q) = \emptyset$ or $loc(q,G) = \emptyset$. Each element in $\bigcup_{q\in \Sigma^{*}\setminus \{\epsilon\}} A_q$ is called an \emph{anchor}, where every anchor is uniquely associated with a sequence $q \in \Sigma^{*}\setminus\{\epsilon\}$. For any anchor $a = (x,y)$, denote its first component as $F(a) = x$ and its second as $H(a) = y$. A subset of $\bigcup_{q\in \Sigma^{*}\setminus \{\epsilon\}} A_q$ is referred to as an \emph{anchor-set} on $(Q,G)$.

%Let $Q\in \Sigma^{*}\setminus\{\epsilon\}$ be a sequence which is going to be aligned with a pan-genome graph $G$. Let $A$ be an  anchor-set on $(Q,G)$. There is  a pre-order relation $\precsim_{q}$  among anchors in $A$ with respect to the query sequence $Q$.  An anchor $a_{i}\in A$ \emph{precedes} another one $a_{j}\in A$ with respect to $Q$, i.e., $a_{i}\precsim_{q} a_{j}$, if and only if $F(a_i) < F(a_j)$. With respect to  $G$,  another pre-order relation $\precsim_{g}$ exists among anchors in $A$. An anchor $a_{i}\in A$ \emph{precedes} another one $a_{j}\in A$ with respect to the  graph $G$, i.e., $a_{i}\precsim_{g}a_{j}$,  if and only if either of the following two propositions holds: (1)  $H(a_i)\cdot v\not = H(a_j)\cdot v$ but $H(a_{i})\cdot v$ can reach  $H(a_{j})\cdot v$ in $G$; (2) $H(a_i)\cdot v = H(a_j)\cdot v$ and $H(a_i)\cdot x <  H(a_j)\cdot x$.

Let $Q \in \Sigma^{*} \setminus \{\epsilon\}$ be a non-empty sequence to be aligned with a pan-genome graph $G$, and let $A$ be an anchor-set defined on $(Q, G)$. A pre-order relation $\precsim_{q}$ is established among anchors in $A$ with respect to $Q$: an anchor $a_i \in A$ precedes $a_j \in A$, denoted $a_i \precsim_{q} a_j$, if and only if $F(a_i) < F(a_j)$. Similarly, with respect to $G$, a pre-order relation $\precsim_{g}$ exists on $A$. We say $a_i \precsim_{g} a_j$ if either: (1) $H(a_i)\cdot v \neq H(a_j)\cdot v$ and there exists a path from $H(a_i)\cdot v$ to $H(a_j)\cdot v$ in $G$, or (2) $H(a_i)\cdot v = H(a_j)\cdot v$ and $H(a_i)\cdot x < H(a_j)\cdot x$. These relations induce a hierarchical ordering of anchors reflecting their sequential and graphical consistency.

%For two anchors $a_i$ and $a_j$, if   $a_i\precsim_{q}a_j$ and $a_i\precsim_{g}a_j$ both hold, then this is denoted as $a_{i}\precsim a_{j}$. A sequence of anchors S is an \emph{anchor-chain} if and only if  $S_i\precsim S_{i+1}$ for every $0\leq i < |S|-1$.

For two anchors $a_i$ and $a_j$, we write $a_i \precsim a_j$ if both $a_i \precsim_{q} a_j$ and $a_i \precsim_{g} a_j$ hold. A sequence of anchors $S$ is called an \emph{anchor-chain} if and only if $S_i \precsim S_{i+1}$ for all $0 \leq i < |S|$.

%Let $Q\in\Sigma^{*}\setminus\{\epsilon\}$ and $G$ respectively be a sequence and a pan-genome graph. Let $W$ be a walk in $G$.  The sequence $\sum_{i=0}^{|W|-1}\delta(W_i) = \delta(W_0) + \delta(W_1) + \dots + \delta(W_{|W|-1})$ is said to be \emph{spelled by $W$} in the pan-genome graph, where "$+$" is the concatenation of two strings.  Every walk in a pan-genome graph naturally spells a sequence. 

Let $Q \in \Sigma^{*} \setminus \{\epsilon\}$ and $G$ denote a sequence and a pan-genome graph, respectively, and let $W$ be a walk in $G$. The sequence defined by $\sum_{i=0}^{|W|-1} \delta(W_i) = \delta(W_0) + \delta(W_1) + \cdots + \delta(W_{|W|-1})$ is said to be \emph{spelled by $W$} in the pan-genome graph, where the operation "$+$" represents string concatenation. Notably, every walk in a pan-genome graph inherently corresponds to a spelled sequence.

%Recall that the \emph{Sequence-to-Graph Exact Matching }(Exa-SGM) Problem is to find a walk in a graph so that the sequence spelled by this walk is the same as a given  query sequence.  The \emph{Sequence-to-Graph Matching with errors }(Err-SGM) problem generalizes the Exa-SGM problem, which asks to find a walk spelling a sequence with minimum edit distance to a query sequence.   If each vertex is labeled with a single character, then the Exa-SGM and Err-SGM problems are respectively called \emph{Single-Exa-SGM} and \emph{Single-Err-SGM}.

Recall the \emph{Sequence-to-Graph Exact Matching} (Exa-SGM) Problem: it seeks a walk in a graph such that the sequence spelled by the vertices along this walk exactly matches a given query sequence. In contrast, the \emph{Sequence-to-Graph Matching with Errors} (Err-SGM) Problem extends Exa-SGM by aiming to find a walk whose spelled sequence minimizes the edit distance to the query sequence. When each vertex is labeled with a single character, these variants are termed \emph{Single-Exa-SGM} and \emph{Single-Err-SGM}, respectively.

\subsection{Problem Formulations}
%Besides a query sequence $Q\in \Sigma^{*}\setminus\{\epsilon\}$, a pan-genome graph $G = (V, E, \delta)$ in an instance of the sequence-to-graph alignment problem, \emph{Co-Linear Chaining (CLC) problem in sequence-to-graph alignment} needs an additional anchor-set. In previous work, such as \cite{Multi-alig-SODA,sum-of-gap,ss-co-linear-overlap-gaps}, an anchor-set is explicitly given by a set of  tuples. However, in this work, an anchor-set is implicitly defined by  Cartesian Product from appearances of some sequences respectively in $Q$ and $G$.

Besides a query sequence $Q \in \Sigma^{*} \setminus \{\epsilon\}$, an instance of the sequence-to-graph alignment problem—specifically, the \emph{Co-Linear Chaining (CLC) problem in sequence-to-graph alignment} requires a pan-genome graph $G = (V, E, \delta)$ and an additional anchor-set. Prior studies, such as \cite{Multi-alig-SODA,sum-of-gap,ss-co-linear-overlap-gaps}, explicitly define the anchor-set as a collection of tuples. In contrast, this work introduces an implicit definition of the anchor-set through the Cartesian Product of occurrences of certain sequences in $Q$ and $G$, respectively.

%An anchor-set $A$ can be partitioned into several subsets so that in each subset the sequence corresponding to every anchor is the same. Without loss of generality,  the number of subsets is denoted as $k$. Let $A = \bigcup_{i=1}^{k}A_i$ in which  $A_i$ is a subset of $A$ such that for $1\leq i\leq k$ the sequence corresponding to every anchor in $A_i$ is $q_i$, and $q_i\not = q_j$ for every two distinct $A_i$ and $A_j$. Let $R=\bigcup_{i=1}^{k}\{q_i\}$. 

An anchor-set $A$ can be decomposed into $k$ disjoint subsets, where each subset contains anchors associated with an identical sequence. Specifically, we write $A = \bigcup_{i=1}^{k} A_i$, such that all anchors in $A_i$ correspond to the same sequence $q_i$, and $q_i \ne q_j$ whenever $i \ne j$. Thus, the set $R = \bigcup_{i=1}^{k} \{q_i\}$ collects all distinct sequences induced by the partition of $A$.

\begin{lemma}\label{implicit-is-possible}
  
        Given an alphabet $\Sigma$, a query sequence $Q\in \Sigma^{*}\setminus\{\epsilon\}$, a pan-genome graph $G = (V, E, \delta)$ and a set of anchors $A$ on $(Q,G)$,  there are two additional sets $X_q\subseteq loc(q,Q)$ and $Y_q\subseteq loc(q,G)$ for each sequence $q\in R$, such that $A \subseteq \bigcup_{q\in R}X_q\times Y_q$.
    
\end{lemma}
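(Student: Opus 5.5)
The plan is a direct construction: define $X_q$ and $Y_q$ as the projections of the anchor subset $A_i$ associated with $q$. Recall the partition $A=\bigcup_{i=1}^{k}A_i$, where every anchor in $A_i$ corresponds to the sequence $q_i$, and $R=\{q_1,\dots,q_k\}$. For each $q=q_i\in R$ set
\[
X_{q_i}=\{F(a) : a\in A_i\},\qquad Y_{q_i}=\{H(a) : a\in A_i\}.
\]

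The first step is to verify that $X_{q_i}\subseteq loc(q_i,Q)$ and $Y_{q_i}\subseteq loc(q_i,G)$. Take any $a\in A_i$. By the definition of an anchor, $a$ lies in some $A_{q}=loc(q,Q)\times loc(q,G)$. Since $a$ corresponds to $q_i$, this $q$ is $q_i$, so $a\in loc(q_i,Q)\times loc(q_i,G)$. Hence $F(a)\in loc(q_i,Q)$ and $H(a)\in loc(q_i,G)$, which gives both inclusions.

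The second step is the covering claim. Let $a\in A$. Then $a\in A_i$ for some $i$. By construction $F(a)\in X_{q_i}$ and $H(a)\in Y_{q_i}$, so $a=(F(a),H(a))\in X_{q_i}\times Y_{q_i}\subseteq\bigcup_{q\in R}X_q\times Y_q$. Since $a$ was arbitrary, $A\subseteq\bigcup_{q\in R}X_q\times Y_q$.

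The only delicate point is that the partition is well defined, i.e.\ that each anchor corresponds to a single sequence $q$. The paper asserts this when defining anchors. If one doubts it, note that a pair $(x,(v,y))$ could a priori lie in $A_q$ for several $q$ (for instance, a string and one of its prefixes). In that case we fix, for each anchor, the sequence attached to it in the partition defining $R$. The argument above uses only this fixed assignment, so it goes through unchanged.

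Finally, the inclusion is generally strict: $X_q\times Y_q$ may contain pairs that are not in $A$. The lemma only claims containment, so this is not an obstacle.
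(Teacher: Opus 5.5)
Your proof is correct and takes the same route as the paper. Both arguments partition $A$ by associated sequence, note that $A_i\subseteq loc(q_i,Q)\times loc(q_i,G)$, cover each $A_i$ by a product $X_{q_i}\times Y_{q_i}$, and take the union. The one difference is that you give explicit witnesses, namely the projections of $A_i$ (the smallest such product). The paper only asserts that suitable $X_{q_i},Y_{q_i}$ exist, which is immediate anyway by taking $X_{q_i}=loc(q_i,Q)$ and $Y_{q_i}=loc(q_i,G)$. Your remark that the anchor-to-sequence assignment must be fixed when an anchor lies in several $A_q$ is a fair point that the paper glosses over.
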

\begin{proof}
        Suppose that $A$ can be partitioned into $k$ subsets, i.e.,  $A = \bigcup_{i=1}^{k}A_i$, such that  the sequence corresponding to every anchor in  $A_i$ is $q_i$ for $1\leq i\leq k$.  According to the definitions of anchors, we have $A_i\subseteq A_{q_i} = loc(q_i,Q)\times loc(q_i,G)$ for $1\leq i\leq k$. Then there are two sets $X_{q_i}\subseteq loc(q_i,Q)$ and $Y_{q_i}\subseteq loc(q_i,G)$ such that $A_i \subseteq X_{q_i}\times Y_{q_i}$ for $1\leq i\leq k$. Therefore, we have $A \subseteq \bigcup_{i=1}^{k} X_{q_i}\times Y_{q_i} = \bigcup_{q\in R}X_q\times Y_q$.
\end{proof}

    By Lemma \ref{implicit-is-possible}, each explicit anchor-set is  a subset of $\bigcup_{q\in R}X_q\times Y_q$, where $R$ is the set of sequences in the given anchor-set, $X_q\subseteq loc(q,Q)$ and $Y_q\subseteq loc(q,G)$.

%The Gap-CLC problem in this work  is given  a query sequence $Q\in \Sigma^{*}\setminus\{\epsilon\}$, a pan-genome graph $G = (V, E, \delta)$,  a set of sequences $R\subseteq \Sigma^{*}$,   two additionally sets: $X_q\subseteq loc(q,Q)$ and $Y_q\subseteq loc(q,G)$ for  each sequence $q\in R$, and  a binary gap-function $f(Gap_{Q}(S), Gap_{G}(S))$.  The aim is to find an anchor-chain $S$ from $\bigcup_{q\in R}X_q\times Y_q$ to maximize $|S|$ $-$ $f(Gap_{Q}(S), Gap_{G}(S))$     where  $Gap_Q(S)$ and $Gap_G(S)$ are two unary gap-cost functions respectively with respect to $Q$ and  $G$. Please see Problem \ref{Gap-CLC-form} for the formulation of Gap-CLC problem. If each vertex is labeled with a single character, then it is named as \emph{Single-Gap-CLC} problem.

The Gap-CLC problem addressed in this work involves a query sequence $Q \in \Sigma^{*} \setminus \{\epsilon\}$, a pan-genome graph $G = (V, E, \delta)$, and a set of sequences $R \subseteq \Sigma^{*}$. For each sequence $q \in R$, we are given two additional sets: $X_q \subseteq loc(q, Q)$ and $Y_q \subseteq loc(q, G)$, along with a binary gap-function $f(Gap_{Q}(S), Gap_{G}(S))$. The objective is to find an anchor-chain $S$ from $\bigcup_{q \in R} X_q \times Y_q$ that maximizes $|S| - f(Gap_{Q}(S), Gap_{G}(S))$, where $Gap_Q(S)$ and $Gap_G(S)$ denote unary gap-cost functions with respect to $Q$ and $G$, respectively. See Problem \ref{Gap-CLC-form} for the formal formulation of the Gap-CLC problem. When each vertex in the graph is labeled with a single character, the problem is referred to as the \emph{Single-Gap-CLC} problem.

\begin{table}[!ht]
    \renewcommand{\arraystretch}{1.2}
    \centering
    \begin{tabular}{l l }
        \toprule
        \textbf{Instance}: & An alphabet $\Sigma$; a query sequence $Q$ over $\Sigma$; a pan-genome graph \\
                            & $G=(V,E,\delta)$   where $\delta: V\rightarrow\Sigma$; a set of sequences $R\subseteq \Sigma^{*}$; for \\
                            & each sequence $q\in R$ there are  two additionally sets: $X_q\subseteq loc(q,Q)$ \\
                            & and $Y_q\subseteq loc(q,G)$; a binary gap-function $f(Gap_{Q}(S), Gap_{G}(S))$. \\
        \midrule
        \textbf{Query}: & Find an anchor-chain $S$ in $\bigcup_{q\in R}X_q\times Y_q$   such that \\
                        & $|S|-f(Gap_{Q}(S), Gap_{G}(S))$ is the maximum. \\
        \bottomrule
    \end{tabular}
     \caption{\textbf{Gap-CLC problem}}\label{Gap-CLC-form}
\end{table}

%The Edit-CLC problem is given  a query sequence $Q\in \Sigma^{*}\setminus\{\epsilon\}$, a pan-genome graph $G = (V, E, \delta)$, a set of sequences $R\subseteq \Sigma^{*}$,   two additionally sets: $X_q\subseteq loc(q,Q)$ and $Y_q\subseteq loc(q,G)$ for  each sequence $q\in R$.  The problem  is asked to find an anchor-chain $S$ from $\bigcup_{q\in R}X_q\times Y_q$   so that the edit distance between $Q$ and the sequence spelled by $S$ is minimum, where one can refer to Section \ref{walk-chain} for the sequence spelled by an anchor-chain. Please see Problem \ref{Edit-CLC-form} for the formulation of Edit-CLC problem. If each vertex is labeled with a single character, then the special case of Edit-CLC problem is named as \emph{Single-Edit-CLC} problem.

The Edit-CLC problem is defined as follows: given a query sequence $Q \in \Sigma^{*} \setminus \{\epsilon\}$, a pan-genome graph $G = (V, E, \delta)$, and a set of sequences $R \subseteq \Sigma^{*}$, along with two associated sets $X_q \subseteq loc(q,Q)$ and $Y_q \subseteq loc(q,G)$ for each $q \in R$, the goal is to find an anchor-chain $S$ drawn from $\bigcup_{q\in R} X_q \times Y_q$ such that the edit distance between $Q$ and the sequence induced by $S$ is minimized. 
%The sequence spelled by an anchor-chain is formally explained in Section~\ref{walk-chain}; 
Problem~\ref{Edit-CLC-form} proposes a precise formulation of the Edit-CLC problem. When every vertex in the graph is labeled with a single character, this special instance of the Edit-CLC problem is referred to as the \emph{Single-Edit-CLC} problem.

\begin{table}[!ht]
    \renewcommand{\arraystretch}{1.2}
    \centering
    \begin{tabular}{l l }
        \toprule
        \textbf{Instance}: & An alphabet $\Sigma$; a query sequence $Q$ over $\Sigma$; a pan-genome graph \\
                            & $G=(V,E,\delta)$   where $\delta: V\rightarrow\Sigma$; a set of sequences $R\subseteq \Sigma^{*}$; for \\
                            & each sequence $q\in R$ there are  two additionally sets: $X_q\subseteq loc(q,Q)$ \\
                            & and $Y_q\subseteq loc(q,G)$.  \\
        \midrule
        \textbf{Query}: & Find an anchor-chain $S$ in $\bigcup_{q\in R}X_q\times Y_q$   such that the \\
                        & distance between $Q$ and  the sequence spelled by $S$ is minimum. \\
        \bottomrule
    \end{tabular}
     \caption{\textbf{Edit-CLC problem}}\label{Edit-CLC-form}
\end{table}

\section{3. The Complexity}
\label{complexity}
%In this section, every vertex is labeled with a single character in a graph,because it suffices to use single symbol to show that Gap-CLC and Edit-CLC in graphs are challenging. We establish a linear-time reduction from the Single-Exa-SGM problem to the Gap-CLC problem and prove that there is  no sub-quadratic time algorithm to solve the Cap-CLC problem, since the Single-Exa-SGM problem does not admit  sub-quadratic time algorithm, even with a binary alphabet \cite{Eaxct-lower-bound}. Concretely, the lower-bound for Single-Exa-SGM problem is as following. For any constant $\gamma > 0$, the Single-Exa-SGM problem can not be solved in  $O(|E|^{1-\gamma}m) + O(|E|m^{1-\gamma})$ time unless the Strong Exponential Time Hypothesis (SETH) is false, even on a binary alphabet \cite{Eaxct-lower-bound}, where $E$ and $m$ are respectively the edge set of a  graph and the length of a query sequence.  

In this section, each vertex in the graph is assigned a single-character label, which is sufficient to demonstrate the inherent difficulty of the Gap-CLC and Edit-CLC problems. We present a linear-time reduction from the Single-Exa-SGM problem to Gap-CLC, establishing that no sub-quadratic time algorithm exists for Gap-CLC, unless the Single-Exa-SGM problem admits such an algorithm, which is ruled out even over a binary alphabet \cite{Eaxct-lower-bound}. Specifically, for any constant $\gamma > 0$, the Single-Exa-SGM problem cannot be solved in either $O(|E|^{1-\gamma}m)$ or $O(|E|m^{1-\gamma})$ time unless the SETH fails, where $E$ denotes the edge set of the graph and $m$ is the length of the query sequence \cite{Eaxct-lower-bound}.

%The input of Single-Exa-SGM problem is a query string $Q\in \Sigma^{*}\setminus\{\epsilon\}$ and a vertex-labeled graph $G$, where each vertex of $G$ is labeled by a single character from the alphabet $\Sigma = \{0,1\}$.  Let $|Q|$ be the length of $Q$ and $|V(G)|$  be the number of vertices in $G$. The  gap-function $f(Gap_{Q},Gap_{G})$ is defined as following. 

The Single-Exa-SGM problem takes as input a non-empty query string $Q \in \Sigma^{*} \setminus \{\epsilon\}$ and a vertex-labeled graph $G$, with each vertex assigned a label from the binary alphabet $\Sigma = \{0,1\}$. Let $|Q|$ denote the length of $Q$ and $|V(G)|$ represent the number of vertices in $G$. The gap-function $f(Gap_{Q}, Gap_{G})$ is defined as follows.

\begin{enumerate}
    %$ weight(a_i) = N_q + N_g$;
    \item %The gap function $f(Gap_{Q}(S), Gap_{G}(S))$ is $Gap_{Q}(S) + Gap_{G}(S)$;
    The gap function $f(Gap_{Q}(S), Gap_{G}(S))$ is defined as the sum $Gap_{Q}(S) + Gap_{G}(S)$, capturing the combined discrepancy between the two gap measures;
    \item The two gap measures are respectively defined as: $Gap_{Q}(S) = \Sigma_{0\leq i < |S|-1}gap_{Q}(S_i,S_{i+1})$ and $Gap_{G}(S) = \Sigma_{0\leq i < |S|-1}gap_{G}(S_i,S_{i+1})$;
    \item %The gap cost $gap_{Q}(S_i, S_{i+1}) = F(S_{i+1})-F(S_i)-1$ for every $0\leq i < |S|-1$;
    The gap cost associated with $Q$, defined as $\text{gap}_Q(S_i, S_{i+1}) = F(S_{i+1}) - F(S_i) - 1$, is computed for each consecutive pair $(S_i, S_{i+1})$ with $0 \leq i < |S| - 1$, capturing the incremental difference in the function $F$ minus a unit offset;
    %is defined as  the number of characters between  $s_{i}^{Q}$ and  $s_{i+1}^{Q}$ in $Q$ (excluding $s_{i}^{Q}$ and  $s_{i+1}^{Q}$).
    \item %For every $0\leq i < |S|-1$, the gap cost $gap_{G}(S_i, S_{i+1}) = 0$, if $H(S_{i+1})\cdot v$ is a neighbor of $H(S_i)\cdot v$. Otherwise, $gap_{G}(S_i, S_{i+1}) = |Q| + |V(G)|$, 
    For each $0 \leq i < |S| - 1$, the gap cost associated with $G$ is defined as $gap_{G}(S_i, S_{i+1}) = 0$ if $H(S_{i+1}) \cdot v$ is adjacent to $H(S_i) \cdot v$ in the graph; otherwise, $gap_{G}(S_i, S_{i+1}) = |Q| + |V(G)|$, 
    %\item The parameter $k = N_q$;
\end{enumerate}
where $S$ is an anchor-chain.

%The reduction from Single-Exa-SGM to Gap-CLC is as following.
The reduction from Single-Exa-SGM to Gap-CLC proceeds as follows.
\begin{enumerate}
    \item %The alphabet remains to be $\{0,1\}$. In addition, the query sequence and the pan-genome graph also remain to be $Q$ and $G$; 
    The alphabet continues to be $\{0,1\}$, and likewise, the query sequence and the pan-genome graph remain denoted as $Q$ and $G$, respectively;
    \item {The set of sequences is  $R = \{0,1\}$};
    \item {Let $X_{q} = loc(q,Q)$ and $Y_q = loc(q,G)$ for every $q\in R$};
    %\item {The weight of every anchor $a \in X_q\times Y_q$ is $ weight(a) = 1$, where $q\in R$.}
\end{enumerate}

{It is easy to find that the reduction takes linear-time.}
Recall that the sequence adhering to every anchor  is a single character. 
%By Lemma \ref{complete-q-cluster} and Definition \ref{anchor-set-def}, 
%in order to earn the anchor-set, it is sufficient for us to store the locations for each of $'A'$, $'C'$, $'G'$ and $'T'$ respectively on $Q$ and $G$. 
%Additionally the  anchor-set can be represented by the locations of $0$ and $1$ respectively on $Q$ and $G$. 
% by traversing $Q$ and $G$. 
So both $loc(q,Q)$ and $loc(q,G)$ for $q\in R$ can be obtained by traversing $Q$ and $G$ in linear time. 
%by use of Algorithm \ref{anchor-set}.

\iffalse
\begin{algorithm}[h!]
%\floatname{algorithm}{Algorithm}% žüžÄËã·šÇ°×ºÃû³Æ
%\renewcommand{\algorithmicrequire}{\textbf{Input:}}% žüžÄÊäÈëÃû³Æ
%\renewcommand{\algorithmicensure}{\textbf{Output:}}% žüžÄÊä³öÃû³Æ
\footnotesize
\caption{\quad Constructing the complete anchor-set in the reduction}
\label{anchor-set}
\begin{algorithmic}[1]
    \REQUIRE A query sequence $Q$ and  a pan-genome graph $G$;
    %\ENSURE $y = x^n$;
    \STATE $loc(0,Q)\leftarrow \emptyset$; $loc(1,Q)\leftarrow \emptyset$;
    \STATE $loc(0,G)\leftarrow \emptyset$; $loc(1,G)\leftarrow \emptyset$;
    \FOR{$0\leq i< |Q|$}
        %$x\leftarrow Q_{i}$;\\
        \STATE $loc(Q_i,Q)\cup\{Q_i\}$;
    \ENDFOR
    \FOR{$v\in V(G)$}
        %$x\leftarrow \delta(v)$;\\
       \STATE $loc(\delta(v),G)\cup \{(v,0)\}$;
    \ENDFOR
\end{algorithmic}
\end{algorithm}
\fi

\begin{lemma}\label{reduction-complexity}
    The reduction  takes $O(|Q| + |V(G)|)$-time.
\end{lemma}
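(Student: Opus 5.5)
The plan is to go through the four components of the reduction one by one and charge each to $|Q|$, to $|V(G)|$, or to a constant. The only one with real content is building the location sets $X_q$ and $Y_q$.

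\emph{Constant-size parts.} The alphabet $\{0,1\}$ and the set $R=\{0,1\}$ are written down in $O(1)$ time. The gap function $f$ is also a fixed rule, so it costs $O(1)$ to specify. It only needs the numbers $|Q|$ and $|V(G)|$, and we compute these during the scans below. The query $Q$ and the graph $G$ are passed through unchanged, so, as in the exact-matching instance, we can hand them over by reference or copy them in linear time. (If $G$ must be copied explicitly, that adds $O(|E(G)|)$, which is linear in the input size. The statement counts only $|Q|+|V(G)|$ because the edges are never touched.)

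\emph{Computing $X_q = loc(q,Q)$.} Every $q\in R$ is a single character. By the definition of location, $x\in loc(q,Q)$ holds exactly when $Q_x=q$. So we initialise $X_0$ and $X_1$ as empty lists and scan $Q$ once, appending each index $i$ to $X_{Q_i}$. This takes $O(|Q|)$ time, and the two lists have total size exactly $|Q|$.

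\emph{Computing $Y_q = loc(q,G)$.} Here $\delta:V\to\Sigma$, so each label has length one. The offset of any location must then satisfy $0\le x<|\delta(v)|=1$, i.e. $x=0$. Hence $loc(q,G)=\{(v,0): \delta(v)=q\}$. One pass over $V(G)$ appends $(v,0)$ to $Y_{\delta(v)}$, for $O(|V(G)|)$ time.

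The main point needing care is that we must \emph{not} materialise the anchor-set $\bigcup_q X_q\times Y_q$. That set can have size $\Theta(|Q|\,|V(G)|)$, which would destroy the bound. This is not a problem, because the Gap-CLC instance (Problem~\ref{Gap-CLC-form}) is specified implicitly by $R$, $X_q$ and $Y_q$, which is exactly why Lemma~\ref{implicit-is-possible} and the Cartesian-product formulation were set up. Summing the costs gives $O(1)+O(|Q|)+O(|V(G)|)=O(|Q|+|V(G)|)$.
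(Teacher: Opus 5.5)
Your proposal is correct and follows the same route as the paper. Both arguments charge constant cost to the alphabet, $R$, $Q$ and $G$ (all passed through unchanged), use one linear scan each of $Q$ and $V(G)$ to obtain the location sets, and define the gap function within the same $O(|Q|+|V(G)|)$ budget. Your extra remarks make explicit three points the paper's proof leaves implicit: every graph location has offset $0$, an explicit copy of the edges would add $O(|E(G)|)$, and $\bigcup_q X_q\times Y_q$ must not be materialised.
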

\begin{proof}
    %The alphabet, query sequence, pan-genome graph are all the same as those in Single-Exa-SGM problem. The set of sequences $R$ is the same as the alphabet.  So the constructions of alphabet, query sequence, pan-genome graph and the se of sequence $R$ take constant time.
    The alphabet, query sequence, pan-genome graph, and the set of sequences $R$ are identical to those defined in the Single-Exa-SGM problem, with $R$ sharing the same composition as the alphabet. Consequently, the construction of these components—alphabet, query sequence, pan-genome graph, and $R$—can be accomplished in constant time.

    %In addition, it takes $O(|Q| + |V(G)|)$-time to obtain the locations. The binary gap-function also takes $O(|Q| + |V(G)|)$-time to be defined. So the reduction needs $O(|Q| + |V(G)|)$-time. 
    Moreover, obtaining the location functions, $loc(q,Q)$ and $loc(q,G)$,  requires $O(|Q| + |V(G)|)$ time, and defining the binary gap-function also takes $O(|Q| + |V(G)|)$ time. Thus, the entire reduction runs in $O(|Q| + |V(G)|)$ time.
\end{proof}

Let $\Pi$ be the set of  instances obtained by the reduction. Let $\omega$ $=$ $|S|$ $-$ $($$Gap_{Q}(S)$ $+$ $Gap_{G}(S)$$)$ be the objective function of the Single-Gap-CLC problem, where $S$ is an anchor-chain. 

\begin{lemma}\label{anchor-nu}
    For each instance in $\Pi$,  if there is an anchor-chain $S$ in $\bigcup_{q\in R}loc(q,Q)\times loc(q,G)$ 
    %in the anchor-set $A$ 
    such that  $\omega\geq |Q|$, then the number of anchors in $S$ is exact $|Q|$.
\end{lemma}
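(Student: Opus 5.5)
We bound $\omega$ from above using only the query-side positions of the anchors, and then show that $\omega \geq |Q|$ forces $|S| = |Q|$.

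Let $S$ be an anchor-chain with $\omega \geq |Q|$. Put $k = |S|$, and assume $k\geq 1$; if $S$ is empty, $\omega = 0 < |Q|$ since $Q$ is non-empty. Because $S_i \precsim S_{i+1}$ includes $S_i \precsim_q S_{i+1}$, we have $F(S_i) < F(S_{i+1})$ for all $i$. So the query positions $F(S_0) < F(S_1) < \cdots < F(S_{k-1})$ are $k$ distinct integers in $\{0,\dots,|Q|-1\}$, which already gives $k \leq |Q|$.

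Next, every term $gap_Q(S_i,S_{i+1}) = F(S_{i+1}) - F(S_i) - 1$ is nonnegative. The sum telescopes:
\[
Gap_Q(S) = F(S_{k-1}) - F(S_0) - (k-1) \geq 0 .
\]
Every term $gap_G(S_i,S_{i+1})$ is either $0$ or $|Q|+|V(G)|$, so $Gap_G(S) \geq 0$ as well. Hence
\[
\omega = k - Gap_Q(S) - Gap_G(S) \leq k \leq |Q| .
\]
Combined with the hypothesis $\omega \geq |Q|$, all the inequalities are equalities. In particular $k = |Q|$, which is the claim.

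The argument also yields $Gap_Q(S) = Gap_G(S) = 0$, so consecutive anchors occupy consecutive positions of $Q$ and consecutive graph vertices are adjacent; this will presumably be used in the later correctness lemma.

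The only step needing care is $Gap_Q(S)\ge 0$, i.e.\ that each query gap term is nonnegative. This follows from the strict inequality in $\precsim_q$ together with integrality of positions. Note that $\precsim$ must be used only through its $\precsim_q$ component: the graph relation $\precsim_g$ may allow repeated vertices via cycles, so it gives no bound on $|S|$. Everything else is routine.
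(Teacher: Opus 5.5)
Your proof is correct and is essentially the paper's argument: strict monotonicity of $F$ along the chain gives $|S|\le|Q|$ (the paper phrases this as a pigeonhole contradiction), and $\omega\le|S|$ gives $|S|\ge\omega\ge|Q|$. Your version is slightly more careful, since it explicitly justifies $Gap_Q(S)\ge 0$ and $Gap_G(S)\ge 0$; the paper uses these tacitly when it asserts that $\omega\ge|Q|$ forces at least $|Q|$ anchors.
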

\begin{proof}
    %Since $\omega\geq |Q|$, there are at least $|Q|$  anchors in $S$. Now we prove that the number of anchors in $S$ could not be larger than $|Q|$ . 
    Since $\omega \geq |Q|$, the set $S$ contains at least $|Q|$ anchors. We now demonstrate that the number of anchors in $S$ cannot exceed $|Q|$, establishing the exact count.

    %Suppose on the contrary that $|S|> |Q|$. By the definition, for every two anchors $S_i$ and $S_j$ in $S$, if $i<j$, then $F(S_i) < F(S_j)$. In addition, there are $|Q|$  elements in $Q$, which indicates that there are $|Q|$  locations in $Q$. If $S$ contains more than $|Q|$  anchors, then there are two distinct anchors $S_i$ and $S_j$ in $S$  locating at the same position on $Q$, i.e.,  $F(S_i) = F(S_j)$. This contradicts to the definition of an anchor-chain. Therefore, the number of anchors in $S$ is exact $|Q|$.
    Suppose, to the contrary, that $|S| > |Q|$. By definition, for any two anchors $S_i$ and $S_j$ in $S$ with $i < j$, we have $F(S_i) < F(S_j)$. Since $Q$ contains exactly $|Q|$ elements, there are only $|Q|$ distinct positions available in $Q$. If $S$ had more than $|Q|$ anchors, then by the pigeonhole principle, at least two distinct anchors $S_i$ and $S_j$ in $S$ would occupy the same position in $Q$, implying $F(S_i) = F(S_j)$. This contradicts the strict monotonicity required by the anchor-chain definition. Hence, $|S|$ cannot exceed $|Q|$, and the number of anchors in $S$ must be exactly $|Q|$.
\end{proof}

\begin{lemma}
    \label{one-to-one-corr-Gap}
    For each instance in $\Pi$,  there is an anchor-chain 
   $S$ in $\bigcup_{q\in R}loc(q,Q)\times loc(q,G)$ 
    %in the anchor-set $A$ 
    such that  $\omega\geq |Q|$, if and only if there is a walk $P$ in  $G$ such that the sequence spelled by $P$ is the same as $Q$.
\end{lemma}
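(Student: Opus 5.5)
We prove the two directions separately. Throughout, each vertex carries a single character, so every location in $G$ has the form $(v,0)$. Every anchor in $\bigcup_{q\in R}loc(q,Q)\times loc(q,G)$ is therefore a pair $(i,(v,0))$ with $Q_i=\delta(v)$.

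For the direction ``$\Leftarrow$'', let $P=P_0P_1\cdots P_{|Q|-1}$ be a walk with $\delta(P_i)=Q_i$ for all $i$. Define $S_i=(i,(P_i,0))$. Each $S_i$ is an anchor, since $i\in loc(Q_i,Q)$ and $(P_i,0)\in loc(Q_i,G)$. Then $F(S_{i+1})-F(S_i)-1=0$, so $Gap_Q(S)=0$. Since $(P_i,P_{i+1})\in E(G)$, the vertex $P_{i+1}$ is adjacent to $P_i$, so $gap_G=0$ and hence $\omega=|S|=|Q|$.

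We must also check that $S$ is an anchor-chain, i.e.\ that $S_i\precsim_g S_{i+1}$. If $P_i\neq P_{i+1}$, condition (1) holds via the edge $(P_i,P_{i+1})$. The delicate case is $P_i=P_{i+1}$ (a self-loop). Here condition (2) would require $0<0$, which fails. I would handle this by reading ``reaches'' in the pre-order as allowing the self-loop walk of positive length. Alternatively, one can adopt the paper's intended convention that consecutive anchors joined by an edge are ordered. This is the step I expect to be the main obstacle, and I would state the convention explicitly, noting that the paper introduced self-loops precisely for repeated positions.

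For the direction ``$\Rightarrow$'', suppose $\omega\geq|Q|$. By Lemma~\ref{anchor-nu}, $|S|=|Q|$. Since $F$ is strictly increasing along $S$ and takes values in $\{0,\dots,|Q|-1\}$, we get $F(S_i)=i$ for every $i$, so $Gap_Q(S)=0$. Then $\omega=|Q|-Gap_G(S)\geq|Q|$ forces $Gap_G(S)=0$. Each $gap_G$ term is either $0$ or $|Q|+|V(G)|>0$, so every consecutive pair satisfies that $H(S_{i+1})\cdot v$ is adjacent to $H(S_i)\cdot v$, meaning $(H(S_i)\cdot v,H(S_{i+1})\cdot v)\in E(G)$.

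Set $P_i=H(S_i)\cdot v$. By the previous paragraph, $P$ is a walk. Since $S_i$ is an anchor for the one-character string $Q_{F(S_i)}=Q_i$, we have $\delta(P_i)=Q_i$. Hence $P$ spells $Q$.
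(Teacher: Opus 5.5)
Your argument follows the paper's proof in both directions. For ``$\Leftarrow$'' you build the same chain, $F(S_i)=i$ and $H(S_i)=(P_i,0)$. For ``$\Rightarrow$'' you use the preceding lemma ($|S|=|Q|$) in the same way to force $Gap_Q(S)=Gap_G(S)=0$. Your ``$\Rightarrow$'' direction is complete. In ``$\Leftarrow$'' you are more careful than the paper, which never checks that the constructed $S$ satisfies $S_i\precsim_g S_{i+1}$.

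The remaining gap is the self-loop case you flagged, and neither of your patches closes it.

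\emph{Reinterpreting ``reaches'' cannot help.} Condition (1) of $\precsim_g$ explicitly requires $H(a_i)\cdot v\neq H(a_j)\cdot v$, so reachability is never consulted when the two vertices coincide. Condition (2) asks for $0<0$, since every offset is $0$ in the single-character setting.

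\emph{Declaring edge-joined anchors ordered changes the problem.} It alters the definition of Gap-CLC rather than proving the lemma about the problem as stated.

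\emph{The ``if'' direction is false when self-loops are present.} Take one vertex $v$ with $\delta(v)=0$, the edge $(v,v)$, and $Q=00$. The walk $v,v$ spells $Q$. However, the only anchors, $(0,(v,0))$ and $(1,(v,0))$, are incomparable under $\precsim_g$. So every anchor-chain has a single anchor, and $\omega=1<|Q|$.

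So the statement itself must be amended, not just the proof. The cleanest repair that leaves the Gap-CLC definition intact is to apply the reduction only to self-loop-free graphs. This costs nothing for the lower-bound theorem, because loops can be removed in linear time as follows.
\begin{enumerate}
\item Replace each looped vertex $v$ by two vertices $v,v'$ with $\delta(v')=\delta(v)$.
\item Join them by the edges $(v,v')$ and $(v',v)$.
\item Give $v'$ an edge $(v',w)$ for every out-edge $(v,w)$ of $v$ with $w\neq v$.
\end{enumerate}
This transformation has linear size and keeps the binary alphabet. It also preserves the existence of a walk spelling $Q$. A run $v,v,v,\dots$ becomes $v,v',v,\dots$. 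Conversely, projecting $v'\mapsto v$ maps every new edge to an edge of the original graph.

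On the resulting graphs, consecutive vertices of $P$ are always distinct. Your verification via condition (1) then goes through verbatim.
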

\begin{proof}
    Suppose that there is an anchor-chain $S$ such that  $\omega\geq |Q|$. According to Lemma \ref{anchor-nu}, $S$ contains precisely $|Q|$ anchors in $S$. 
    %Recall that $Gap_Q(S)\geq 0$ and $Gap_G(S)\geq 0$ under the restrictions. So $|S| = N_{q}$ must hold true, i.e., every element of $Q$ must be in an anchor of $S$ so as to achieve $\omega\geq N_q(N_q + N_g)$. 
    %This means that all characters of $Q$ are chosen in the resulting solution. Moreover, $Gap_Q(S) = 0$ and $Gap_G(S) = 0$ should hold true simultaneously, which implies that $F(S_{i+1}) = F(S_i) + 1$ for every adjacent anchors $S_i$ and $S_{i+1}$ in $S$.  In addition,  $H(S_{i})\cdot v$ and $H(S_{i+1})\cdot v$ are also adjacent in $G$ for  $0\leq i < |S|-1$.   Therefore, $S$ is a walk in $G$ such that the sequence spelled by $S$ is $Q$.
    This implies that every character of $Q$ is included in the resulting solution. Furthermore, both $Gap_Q(S) = 0$ and $Gap_G(S) = 0$ must hold simultaneously, indicating that $F(S_{i+1}) = F(S_i) + 1$ for any consecutive anchors $S_i$ and $S_{i+1}$ in $S$. Additionally, $H(S_i)\cdot v$ and $H(S_{i+1})\cdot v$ are adjacent in $G$ for all $0 \leq i < |S|-1$. Hence, $S$ forms a walk in $G$ where the sequence induced by $S$ exactly spells $Q$.
    %the same as $Q$. 
    %The construction of the walk $\mathcal{S}$ takes $O(N_q)$ time.

    %Suppose that there is a walk $P$ on  $G$ such that the sequence spelled by $P$ is the same as $Q$. Since every vertex is labeled with a single character, the number of vertices in $P$ is equal to the number of characters in $Q$. 
    Consider a walk $P$ on $G$ whose vertex sequence spells out the same string as $Q$. Since each vertex in $G$ is labeled with a single character, the length of $P$ (in terms of vertices) exactly matches the number of characters in $Q$.
    %Let $P$ $=$ $v_0$, $v_1$, $v_2$, $\dots$, $v_{N_{q}-1}$. We now construct an anchor-chain $S$ $=$ $s_0$, $s_1$, $s_2$, $\dots$, $s_{N_q-1}$, 
    %where $s_{i}$ $=$ $(v_i, [i,i], [0,0])$ 
    %where $F(s_i) = i$ and $H(s_i) = (v_i, 0)$ for ever $s_i:\ 0\leq i\leq N_q-1$. 
    %Let $S$ be an anchor-chain such that $|S| = |P|$, $F(S_i) = i$ and $H(S_i) = (P_i, 0)$ for  $0\leq i\leq |P|-1$.     Then $Gap_Q(S) = 0$, $Gap_G(S) = 0$ and $|S| = |Q|$. Thus we obtain an anchor-chain $S$     $\bigcup_{q\in R}loc(q,Q)\times loc(q,G)$     such that  $\omega\geq |Q|$. 
    Let $S$ be an anchor-chain with $|S| = |P|$, where $F(S_i) = i$ and $H(S_i) = (P_i, 0)$ for $0 \leq i \leq |P| - 1$. Then, $Gap_Q(S) = 0$, $Gap_G(S) = 0$, and $|S| = |Q|$. Hence, we obtain an anchor-chain $S$ in $\bigcup_{q \in R} loc(q, Q) \times loc(q, G)$ such that $\omega \geq |Q|$.
\end{proof}

\begin{theorem}
    \label{Gap-CLC-lower-bound}
    There is no sub-quadratic time algorithm for the Gap-CLC problem, unless the SETH is false, even the alphabet is binary. 
\end{theorem}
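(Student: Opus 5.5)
The plan is a contrapositive argument built from the reduction already described, Lemma \ref{reduction-complexity} and Lemma \ref{one-to-one-corr-Gap}. Suppose some algorithm $\mathcal{A}$ solves the Gap-CLC problem in time $O(|E|^{1-\gamma}m)$ or $O(|E|m^{1-\gamma})$ for a constant $\gamma>0$, where $m=|Q|$ and $E$ is the edge set of the pan-genome graph. I will use $\mathcal{A}$ to decide Single-Exa-SGM over $\Sigma=\{0,1\}$ within the same bound, which contradicts the lower bound of Equi et al.\ \cite{Eaxct-lower-bound} unless SETH fails.

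The algorithm for Single-Exa-SGM runs in three steps.
\begin{enumerate}
\item Given $(Q,G)$, build the instance of $\Pi$: $R=\{0,1\}$, $X_q=loc(q,Q)$, $Y_q=loc(q,G)$, with the gap function defined above. By Lemma \ref{reduction-complexity} this takes $O(|Q|+|V(G)|)$ time.
\item Run $\mathcal{A}$ on this instance to obtain an optimal anchor-chain $S^*$ and its value $\omega(S^*)$.
\item Answer ``yes'' if and only if $\omega(S^*)\ge |Q|$.
\end{enumerate}
Correctness follows from Lemma \ref{one-to-one-corr-Gap}. A walk spelling $Q$ exists if and only if some anchor-chain has $\omega\ge|Q|$. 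Since $S^*$ maximizes $\omega$, such a chain exists if and only if $\omega(S^*)\ge|Q|$. Evaluating $\omega(S^*)$ takes time linear in $|S^*|$, given that adjacency in $G$ can be checked quickly, for example with adjacency lists or hashing. Also $|S^*|\le|Q|$, because the $F$-values along a chain are strictly increasing, so this check is cheap.

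For the running time, the reduction keeps the query unchanged and keeps the graph unchanged, so $m$ and $|E|$ are preserved. The added overhead is $O(|Q|+|V(G)|)$. We may assume $G$ has no isolated vertices, which can be pruned in linear time unless $|Q|=1$, and that case is trivial. Then $|V(G)|=O(|E|)$, so the overhead is $O(m+|E|)$, which is dominated by $O(|E|^{1-\gamma}m)$ or $O(|E|m^{1-\gamma})$. The total is therefore sub-quadratic, contradicting \cite{Eaxct-lower-bound}. Because the alphabet is $\{0,1\}$ throughout, the hardness holds even for binary alphabets.

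The step I expect to be the main obstacle is a technical one. We must make sure the size measure used for the Gap-CLC instance is the same pair $(|E|,m)$, or at least comparable to it. The anchor set $\bigcup_q X_q\times Y_q$ can have size $\Theta(m|V|)$, but it is given implicitly through the sets $X_q$ and $Y_q$, and that is why the input size stays linear. I would state explicitly that ``sub-quadratic'' means time $O(|E|^{1-\gamma}m)$ or $O(|E|m^{1-\gamma})$ measured in the graph and query sizes, not in the number of anchors. Under that convention the contradiction goes through.
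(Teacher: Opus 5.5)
Your proposal is correct and is essentially the paper's own proof. You apply the linear-time reduction (Lemma 2), decide Single-Exa-SGM by checking whether the optimum satisfies $\omega\ge|Q|$ via the equivalence in Lemma 4, and contradict the SETH lower bound of Equi et al.; your extra bookkeeping (computing $\omega(S^*)$, ensuring $|V(G)|=O(|E|)$, and measuring ``sub-quadratic'' in $(|E|,m)$ rather than in the number of implicitly given anchors) only makes explicit what the paper leaves tacit.

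One caveat is inherited from the paper: the walk-to-chain direction of Lemma 4 fails for walks that traverse a self-loop, since two consecutive anchors at the same single-character vertex would require $0<0$ under $\precsim_{g}$. This is harmless here because the lower bound of Equi et al.\ already holds on DAGs, which have no self-loops.
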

\begin{proof}
    %Suppose on the contrary that there is a sub-quadratic time algorithm $\mathcal{F}$ for the Gap-CLC problem. Then $\mathcal{F}$ takes sub-quadratic time  to solve Single-Gap-CLC problem on the set of instances $\Pi$. 
    Suppose, to the contrary, that there exists a sub-quadratic time algorithm $\mathcal{F}$ for the Gap-CLC problem. Then, $\mathcal{F}$ can solve the Single-Gap-CLC problem on the instance set $\Pi$ in sub-quadratic time.
    
    %By Lemma \ref{one-to-one-corr-Gap}, for every instance in $\Pi$, if the objective $\omega\geq |Q|$, then there is a walk on the pan-genome graph such that the sequence spelled by the walk is the same as the given $Q$. On the other hand, if the objective $\omega< |Q|$, then the query sequence could not be  spelled by any walk in the pan-genome graph. In addition, by Lemma \ref{reduction-complexity}, it takes sub-quadratic time to construct the instances in $\Pi$.
    By Lemma \ref{one-to-one-corr-Gap}, for each instance in $\Pi$, whenever the objective value $\omega$ satisfies $\omega \geq |Q|$, there exists a walk in the pan-genome graph whose induced sequence exactly matches the given query sequence $Q$. Conversely, if $\omega < |Q|$, no such walk can spell the entire sequence $Q$. Furthermore, according to Lemma \ref{reduction-complexity}, constructing the instances in $\Pi$ requires sub-quadratic time.
    
    %Consequently,  we have achieved a sub-quadratic time algorithm for the Single-Exa-SGM problem, which contradicts to the recent conclusion that there is no sub-quadratic time algorithm for the Single-Exa-SGM problem, unless the Strong Exponential Time Hypothesis (SETH) is false, even on a binary alphabet \cite{Eaxct-lower-bound}. So  there is no sub-quadratic time algorithm for the Gap-CLC problem, unless the Strong Exponential Time Hypothesis (SETH) is false, even on a binary alphabet. 
    Consequently, we have developed a sub-quadratic time algorithm for the Single-Exa-SGM problem, challenging the recent assertion that no such algorithm exists unless the SETH fails—even over a binary alphabet \cite{Eaxct-lower-bound}. This implies that the Gap-CLC problem also admits no sub-quadratic time algorithm.
\end{proof}

    %In addition, Single-Err-SGM problem has been proven to be NP-hard if errors are allowed on a  graph \cite{original}, even with a binary alphabet \cite{binary-alphabet}. We construct a linear-time reduction from Single-Err-SGM problem to Edit-CLC problem and prove that  the Edit-CLC problem is  NP-hard in the following Theorem \ref{Err-CLC-NP-hard}.
    Moreover, the Single-Err-SGM problem has been shown to be NP-hard when errors are permitted on a graph \cite{original}, even in the case of a binary alphabet \cite{binary-alphabet}. We present a linear-time reduction from the Single-Err-SGM problem to the Edit-CLC problem and establish its NP-hardness in Theorem \ref{Err-CLC-NP-hard}.

\begin{theorem}
    \label{Err-CLC-NP-hard}
    The Edit-CLC problem is NP-hard if  errors are allowed  on  pan-genome graphs, even on a binary alphabet.
\end{theorem}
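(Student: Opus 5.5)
We reduce from Single-Err-SGM with errors allowed on the graph, which is NP-hard even over the binary alphabet $\{0,1\}$ \cite{original,binary-alphabet}. The reduction mirrors the one used for Theorem \ref{Gap-CLC-lower-bound}. Given an instance $(Q,G)$ of Single-Err-SGM, where every vertex of $G$ carries a single character, we keep $\Sigma=\{0,1\}$, $Q$ and $G$ unchanged. We set $R=\{0,1\}$, and take $X_q=loc(q,Q)$ and $Y_q=loc(q,G)$ for each $q\in R$. By the same argument as in Lemma \ref{reduction-complexity}, this construction takes $O(|Q|+|V(G)|)$ time. The remaining work is to show that the optimal values of the two instances coincide, or at least that an optimal solution of one can be converted into an optimal solution of the other in polynomial time.

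The key step is a correspondence between walks in $G$ and the sequences spelled by anchor-chains. We fix the convention that an anchor-chain $S$ spells the sequence obtained by walking along $H(S_0)\cdot v, H(S_1)\cdot v,\dots$, with consecutive vertices joined by walks in $G$. These connecting walks exist by the definition of $\precsim_g$.

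In one direction, let $P$ be a walk whose spelled sequence has minimum edit distance $d$ to $Q$. Fix an optimal alignment of $Q$ with the string spelled by $P$ and keep its match columns. A match pairs a position $i$ of $Q$ with a vertex $P_j$ satisfying $Q_i=\delta(P_j)$, so it yields the anchor $(i,(P_j,0))\in X_{Q_i}\times Y_{Q_i}$. Along the alignment these anchors are strictly increasing in $Q$, and they are increasing in $G$ because $P$ provides walks between them. We add the requirement that each anchor-chain comes with its connecting walk. With this requirement the spelled sequence can be taken to be the string spelled by $P$, so its edit distance to $Q$ is at most $d$. An anchor-chain can be empty or too short to reach every part of $P$; we handle this either by allowing the connecting walk before the first anchor and after the last one, or by noting that optimal walks may be trimmed.

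In the other direction, every anchor-chain together with its connecting walks spells the sequence of some walk in $G$. Its edit distance to $Q$ is therefore at least the optimum of Single-Err-SGM. Hence the two optima are equal, and a polynomial algorithm for Edit-CLC would give one for Single-Err-SGM, so Edit-CLC is NP-hard.

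The main obstacle is that the paper has not pinned down precisely which sequence an anchor-chain spells. The proof must fix a definition under which the spelled sequence is the label of a genuine walk through the anchor vertices, so that the infimum over chains equals the infimum over walks. The step I would check most carefully is the boundary case: an alignment with no matches, or a walk whose endpoints are not anchors. I would first try to make the definition include the connecting walks and the prefix and suffix walks, which makes both inequalities immediate. If the definition forbids this, I would fall back on arguing that some optimal walk begins and ends at matched vertices, except in the trivial case, which can be solved directly.
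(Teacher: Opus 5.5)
\textbf{Genuine gap: you reduce from the wrong problem.} Your reduction is the paper's own: keep $\Sigma=\{0,1\}$, $Q$ and $G$, and set $R=\{0,1\}$, $X_q=loc(q,Q)$, $Y_q=loc(q,G)$. You go further than the paper, whose proof only asserts that the reduction works, by trying to show that the optima coincide. The trouble is that the problem you prove equivalence with is not the NP-hard one.

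You take the Single-Err-SGM optimum to be $\min_P \mathrm{ed}(Q,\ell(P))$ over walks $P$, where $\ell(P)$ is the spelled string. Since edit distance is symmetric, this is exactly the variant with errors in the query only. That variant is polynomial-time solvable, by the $O(|V|+m|E|)$ dynamic program over pairs of a query position and a vertex.

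The NP-hardness results for errors \emph{on the graph} concern a different objective. There the labels $\delta$ themselves are edited, and each edited vertex is paid for once. $Q$ must then be spelled exactly by a walk in the edited graph, so a vertex visited several times shows the same edited label on every visit. This consistency requirement is what separates the hard variant from the easy one. An edit distance between $Q$ and a spelled string charges a mismatch at every traversal, so it cannot express it.

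Under the definition you settle on (connecting, prefix and suffix walks included), the anchors constrain nothing. The empty chain, or any thinning of the match columns, together with $P$ already attains the optimum. So your Edit-CLC collapses to $\min_P \mathrm{ed}(Q,\ell(P))$ and lies in P. Your concluding sentence therefore only transfers an algorithm to a problem already known to be polynomial, and no hardness follows. The paper's proof rests on the same identification: it defines Err-SGM as the walk-minimizing edit distance and then cites hardness for graph errors. So the missing step cannot be borrowed from it.

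\textbf{What a valid argument needs.} You would have to define Edit-CLC with graph errors so that edits act persistently on $\delta$. For instance: minimize the number of relabeled vertices such that some anchor-chain with its connecting walks spells $Q$ exactly in the relabeled graph. You would then reduce from the corresponding version of Single-Err-SGM. Finally, check that for $R=\{0,1\}$ with complete location sets the chain requirement adds no constraint, keeping in mind that $Y_q$ is computed from the unedited labels.

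\textbf{A smaller error in your forward direction.} The match columns need not form an anchor-chain ``because $P$ provides walks between them''. Two consecutive matched positions of $P$ can be the same vertex, through a self-loop (which the paper allows) or a cycle whose interior is unmatched. Then both anchors have location $(v,0)$. Since $\precsim_g$ requires distinct vertices or strictly increasing offsets, the two anchors are incomparable. You would have to thin the chain. That is harmless under your definition, but it is one more sign that the anchors play no role in your objective.
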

\begin{proof}
    %{We additionally append the set of sequences is $R = \{0,1\}$ to the instance of Single-Err-SGM problem and let $X_{q} = loc(q,Q)$ and $Y_q = loc(q,G)$ for each $q\in R$. By this way, the Single-Err-SGM problem is reduced to the Single-Edit-CLC problem. }
    We further augment the Single-Err-SGM problem instance by appending the sequence set $R = \{0,1\}$, and define $X_q = \text{loc}(q,Q)$ and $Y_q = \text{loc}(q,G)$ for each $q \in R$. This transformation effectively reduces the Single-Err-SGM problem to the Single-Edit-CLC problem.

    %{In addition, this reduction } can be earned in $O(|Q|+|V(G)|)$-time.  Recall that Amir et al. \cite{original} has proved that the Single-Err-SGM problem is NP-hard if errors are allowed  on the graph, even with a binary alphabet \cite{binary-alphabet}. Therefore, the Edit-CLC problem is NP-hard if  errors are allowed  on pan-genome graphs.
    In addition, this reduction can be computed in $O(|Q|+|V(G)|)$-time. Recall that Amir et al. \cite{original} demonstrated the NP-hardness of the Single-Err-SGM problem when errors are permitted in the graph, even over a binary alphabet \cite{binary-alphabet}. Consequently, the Edit-CLC problem is also NP-hard when errors are allowed in pan-genome graphs.
\end{proof}

\section{4. Conclusions}
\label{conclude}
%Co-linear chaining {problems} have been widely used in sequence-to-sequence alignment. However, it is harder in sequence-to-graph alignment. We studied its complexity in sequence-to-graph alignment and focused on Gap-CLC problem and Edit-CLC problem. The study shows that Gap-CLC problem is not easier than Exa-SGM problem. Moreover, Edit-CLC problem is NP-hard if errors are allowed on pan-genome graphs. Both of these conclusions hold true even on a binary alphabet. This work indicates that in sequence-to-graph alignment models with "co-linear" is at least as complex as those without "co-linear". However, "co-linear" has been used commonly in practice. So it is interesting to close the gap between theory and practice for co-linear chaining in sequence-to-graph alignment. In addition, there are few algorithmic results for the Edit-CLC problem, including exact algorithms and approximation algorithms. A lot of work needs to be done in this direction.  
Co-linear chaining problems have found extensive applications in sequence-to-sequence alignment, yet their extension to sequence-to-graph alignment presents greater challenges. In this work, we investigate the computational complexity of CLC in the context of sequence-to-graph alignment, focusing specifically on the Gap-CLC and Edit-CLC problems. We demonstrate that the Gap-CLC problem is at least as hard as the Exa-SGM problem, while the Edit-CLC problem is proven to be NP-hard when errors are permitted on pan-genome graphs—even over a binary alphabet. These complexity results reveal that incorporating the "co-linear" constraint in sequence-to-graph alignment does not reduce the theoretical complexity compared to non-co-linear models. 

\section{5. Discussions}
\label{discussions}
Despite these theoretical challenges, "co-linear" chaining remains widely adopted in practical tools, highlighting a significant gap between theoretical understanding and real-world practice. Bridging this gap presents an important and intriguing direction for future research. Furthermore, algorithmic solutions for the Edit-CLC problem remain scarce, with limited progress on both exact and approximation algorithms, indicating a clear need for further investigation and development in this area.

%In addition, in our definitions of Gap-CLC problem and Edit-CLC problem, an anchor is not defined as an explicit tuple in this work. Although Lemma \ref{implicit-is-possible} tells us that each explicit anchor-set is  a subset $\bigcup_{q\in R}X_q\times Y_q$, where $R$ is the set of sequences in the anchor-set, $X_q\subseteq loc(q,Q)$ and $Y_q\subseteq loc(q,G)$,  implicit anchor-set would change the input format of co-linear chaining problems.  It is still unknown that whether there is a  linear-time reduction, or a sub-quadratic time reduction,  in the case that an anchor is defined as an explicit tuple. 
Moreover, in this study, we do not define an anchor as an explicit tuple in the formulations of the Gap-CLC and Edit-CLC problems. While Lemma \ref{implicit-is-possible} indicates that every explicit anchor-set is contained within $\bigcup_{q\in R}X_q\times Y_q$, where $R$ denotes the set of sequences in the anchor-set and $X_q\subseteq loc(q,Q)$, $Y_q\subseteq loc(q,G)$, adopting implicit anchor-sets alters the input structure of co-linear chaining problems. It remains an open question whether a linear-time or even a sub-quadratic time reduction exists when anchors are represented as explicit tuples.

{\footnotesize
\iffalse

\fi

\bibliographystyle{plain}
\bibliography{sample}

}

\end{document}